\author{Joachim Gudmundsson, Tiancheng Mai, Sampson Wong}
\newtheorem{theorem}{Theorem}
\newtheorem{definition}[theorem]{Definition}
\newtheorem{lemma}[theorem]{Lemma}
\newtheorem{corollary}[theorem]{Corollary}
\newtheorem{problem}[theorem]{Problem}
\newtheorem{fact}[theorem]{Fact}
\DeclareMathOperator{\simpl}{simpl}
\DeclareMathOperator{\trough}{trough}
\newcommand{\disF}[2]{\ensuremath{d_F(#1,#2)}}
\newcommand{\dis}[3]{\ensuremath{d_{#1}(#2,#3)}}
\title{Approximating the Fr\'echet distance when only one curve is~\texorpdfstring{$c$}{c}-packed}
\date{}
\begin{document}

\maketitle

\begin{abstract}
One approach to studying the Fr\'echet distance is to consider curves that satisfy realistic assumptions. By now, the most popular realistic assumption for curves is~\mbox{$c$-packedness}. Existing algorithms for computing the Fr\'echet distance between~\mbox{$c$-packed} curves require both curves to be~\mbox{$c$-packed}. In this paper, we only require one of the two curves to be~\mbox{$c$-packed}. Our result is a nearly-linear time algorithm that $(1+\varepsilon)$-approximates the Fr\'echet distance between a~\mbox{$c$-packed} curve and a general curve in~$\mathbb R^d$, for constant values of~$\varepsilon$,~$d$ and~$c$.
\end{abstract}

\section{Introduction}

The Fr\'echet distance~\cite{Frechet1906} is a popular similarity measure between curves. The Fr\'echet distance has a variety of applications, from geographic information science~\cite{Laube2014,Ranacher2014,Toohey2015} to computational biology~\cite{Jiang2008,Wylie2013} and data mining~\cite{Kenefic2014,Wang2013}. The Fr\'echet distance can be seen as the minimum leash length of a dog walking problem.

Suppose a person and a dog walk along two polygonal curves~$P$ and~$Q$, respectively. The goal of both the person and the dog is to walk along the path, independently and at possibly different speeds, but without leaving the path or going backwards. The leash length of a given walk is defined to be the maximum distance attained between the person and the dog. The Fr\'echet distance is the globally minimum leash length over all possible walks.

The Fr\'echet distance can be computed between a pair of polygonal curves in nearly-quadratic time. Alt and Godau~\cite{Alt1995} provided an~$O(n^2 \log n)$ time exact algorithm for computing the Fr\'echet distance. Buchin, Buchin, Meulemans and Mulzer~\cite{Buchin2017} provided a randomised exact algorithm that computes the Fr\'echet distance in time~$O(n^2 \sqrt{\log n} ( \log \log n)^{3/2})$ on a pointer machine, and in time~$O(n^2 (\log \log n)^2))$ on word RAM. 

Conditional lower bounds imply that the Fr\'echet distance problem is unlikely to admit a strongly subquadratic time algorithm. Bringmann~\cite{Bringmann2014} showed that, under the Strong Exponential Time Hypothesis, the Fr\'echet distance cannot be computed in time $O(n^{2-\delta})$ for any $\delta > 0$, if we allow for approximation factors up to 1.001. Buchin, Ophelders and Speckmann~\cite{DBLP:conf/soda/BuchinOS19} showed the same conditional lower bound even if we allow for approximation factors up to 3, and even if the curves are one dimensional.

One approach to circumvent the conditional lower bounds on the Fr\'echet distance is to focus on curves that satisfy realistic assumptions. Realistic assumptions reflect the spatial distribution of curves from real-world data sets~\cite{DBLP:journals/comgeo/GudmundssonSW23}. The most popular realistic input assumption for curves under the Fr\'echet distance is~\mbox{$c$-packedness}~\cite{Driemel2012}. A curve $\pi \in \mathbb R^d$ is~\mbox{$c$-packed} if for all~$r>0$, the total length of~$\pi$ inside any ball of radius~$r$ is upper bounded by~$cr$. 

Driemel, Har-Peled and Wenk~\cite{Driemel2012} introduced the~\mbox{$c$-packedness} assumption and presented a $(1+\varepsilon)$-approximation algorithm for the Fr\'echet distance between a pair of~\mbox{$c$-packed} curves. Their algorithm runs in $O(cn/\varepsilon + cn \log n)$ time for curves in $\mathbb R^d$. Bringmann and K\"unnemann~\cite{Bringmann2017} improved the running time of the algorithm to $O(\frac{cn}{\sqrt \varepsilon} \log^2 (1/\varepsilon) + cn \log n)$ for curves in $\mathbb R^d$. Assuming the Strong Exponential Time Hypothesis, Bringmann~\cite{Bringmann2014} showed that $(i)$ for sufficiently small constants~$\varepsilon > 0$ there is no $(1+\varepsilon)$-approximation in time $O((cn)^{1-\delta})$ for any $\delta >0$, and $(ii)$ in any dimension~$d \geq 5$ there is no $(1+\varepsilon)$-approximation in time $O((cn/\sqrt \varepsilon)^{1-\delta})$ for any~$\delta > 0$. 

Existing algorithms~\cite{Bringmann2017, Driemel2012} for computing the Fr\'echet distance between~\mbox{$c$-packed} curves require that both curves are~\mbox{$c$-packed}. An open problem is whether the Fr\'echet distance can be approximated efficiently when only one curve is $c$-packed. This asymmetric case may occur if the two curves come from two different data sets. For example, in error detection we may want to match a curve containing errors to a curve close to the ground truth. 

\begin{problem}
    \label{problem:main}
    Can the Fr\'echet distance be approximated efficiently if only one of the two curves is~$c$-packed? In particular, for constant values of $\varepsilon$,~$d$ and~$c$, can we obtain a subquadratic time $(1+\varepsilon)$-approximation of the Fr\'echet distance between a~\mbox{$c$-packed} curve and a general curve in $\mathbb R^d$?
\end{problem}

We resolve Problem~\ref{problem:main} in the affirmative. Our result is an $O(c^3 (n+m) \log^{2d+1} (n) \log m)$ time algorithm that $(1+\varepsilon)$-approximates the Fr\'echet distance between a~\mbox{$c$-packed} curve with~$n$ vertices in~$\mathbb R^d$ and a general curve with~$m$ vertices in~$\mathbb R^d$, where~$\varepsilon$ is a constant. In other words, to $(1+\varepsilon)$-approximate the Fr\'echet distance in nearly-linear time, our result implies that it suffices to assume that only one of the two curves is \mbox{$c$-packed}. Our result is stated formally in Theorem~\ref{theorem:main_algorithm}. Note that for constant values of~$d$, the running time is also polynomial in~$c$ and~$\varepsilon$.

\subsection{Related work}

By now, the most popular realistic input assumption for curves under the Fr\'echet distance is $c$-packedness.  The $c$-packedness assumption has been applied to a wide variety of Fr\'echet distance problems. Typically, these algorithms incur an approximation factor of~$(1+\varepsilon)$, and have a polynomial dependence on~$\varepsilon^{-1}$. Chen, Driemel, Guibas, Nguyen and Wenk~\cite{DBLP:conf/alenex/ChenDGNW11} study the map matching problem between a $c$-packed curve and realistic graph, that is, to compute a path in the graph that is most similar to the $c$-packed curve. Har-Peled and Raichel~\cite{DBLP:journals/talg/Har-PeledR14} compute the mean curve of a set of $c$-packed curves. The mean curve is a curve that minimises its maximum weak Fr\'echet distance to the set of curves. Driemel and Har-Peled~\cite{Driemel2013} consider a variant of the Fr\'echet distance on $c$-packed curves, where any subcurve of the $c$-packed curve can be replaced by a shortcut segment. Br\"uning, Conradi and Driemel~\cite{DBLP:conf/esa/BruningCD22} and Gudmundsson, Huang, van Renssen and Wong~\cite{DBLP:conf/isaac/GudmundssonHRW23} study two distinct variants of the subtrajectory clustering problem on $c$-packed curves, that is, to detect trajectory patterns by computing clusters of subcurves. Van der Hoog, Rotenberg and Wong~\cite{DBLP:journals/corr/abs-2212-07124} study data structures for $c$-packed curves under the discrete Fr\'echet distance. Conradi, Driemel and Kolbe~\cite{DBLP:journals/cgt/ConradiDK24} consider the approximate nearest neighbour problem for $c$-packed curves in doubling metrics. Conradi, Driemel and Kolbe~\cite{DBLP:journals/corr/abs-2401-03339} compute the Fr\'echet distance between $c$-packed piecewise continuous smooth curves.

Given a polygonal curve, the problem of computing its packedness value~$c$ has been considered. Gudmundsson, Sha and Wong~\cite{DBLP:journals/comgeo/GudmundssonSW23} provide a $6.001$-approximation algorithm that runs in $O(n^{4/3} \log^9 n)$ time for curves in $\mathbb R^2$. They also provided an implementation for a $2$-approximation algorithm that runs in $O(n^2)$, and verified that $c < 50$ for a majority of data sets that were tested. Har-Peled and Zhou~\cite{DBLP:journals/corr/abs-2105-10776} provide a randomised 288.001-approximation algorithm that runs in $O(n \log^2 n)$ time and succeeds with high probability.

The $c$-packedness assumption can be applied to any set of edges, as a result, $c$-packed graphs have also been studied. Gudmundsson and Smid~\cite{DBLP:journals/comgeo/GudmundssonS15} study the map matching problem between a curve with long edges and a $c$-packed graph with long edges. They consider the data structure variant, where the graph is known in preprocessing time and the curve is only known at query time. Gudmundsson, Seybold and Wong~\cite{DBLP:journals/talg/GudmundssonSW24} generalise the result of~\cite{DBLP:journals/comgeo/GudmundssonS15} and provide a map matching data structure for any $c$-packed graph and for any query curve. 

%Lindsey's paper

\subsection{Notation}

Let $\varepsilon > 0$ be a positive real number. Without loss of generality, we can assume $0 < \varepsilon < \frac 1 2$, as providing a~$(1+\varepsilon)$-approximation for smaller values of~$\varepsilon$ also provides a~$(1+\varepsilon)$-approximation for larger values of~$\varepsilon$. 

Let $d$ be a fixed positive integer, and let $\mathbb R^d$ be $d$-dimensional Euclidean space. A polygonal curve $P = p_1 \ldots p_n$ in $\mathbb R^d$ consists of $n$ vertices $\{p_i\}_{i=1}^n$ connected by $n-1$ straight line segments~$\{p_i p_{i+1}\}_{i=1}^{n-1}$, where $p_i \in \mathbb R^d$ and~$p_i p_{i+1} \subset \mathbb R^d$. 

We define $c$-packedness. Let $c$ be a positive real number. A polygonal curve $P$ in $\mathbb R^d$ is $c$-packed if, for any radius~$r > 0$ and for any ball $B(p,r)$ centred at $p \in \mathbb R^d$ with radius~$r$, the set of segments in $P \cap B(p,r)$ has total length upper bounded by $c r$. 

Next, we define the Fr\'echet distance. Let $P = p_1 \ldots p_n$. With slight abuse of notation, define the function $P:[1,n] \to \mathbb R^d$ so that $P(i) = p_i$ for all integers $i \in \{1, \ldots, n\}$, and $P(i + x) = (1-x) p_i + x p_{i+1}$ for all reals $x \in [0,1]$. Let $\Gamma(n)$ be the space of all continuous, non-decreasing, surjective functions from $[0,1] \to [1,n]$. For a pair of polygonal curves $P = p_1,\ldots,p_n$ and $Q = q_1,\ldots,q_m$, we define the Fr\'echet distance to be

\[
    d_F(P,Q)
    =
    \inf_{\substack{\alpha \in \Gamma(n) \\ \beta \in \Gamma(m)}} \max_{\mu \in [0,1]} d(P(\alpha(\mu)), Q(\beta(\mu)))
\]
where $d(\cdot,\cdot)$ denotes the Euclidean distance in $\mathbb R^d$. 

\section{Decision algorithm}
\label{section:decision}

In this section, we solve the decision version of the Fr\'echet distance problem. We defer the optimisation version of the Fr\'echet distance problem to Section~\ref{section:optimisation}. Both the decision and optimisation versions will incur an approximation factor of $(1+\varepsilon)$.

We formally define the decision version. First we define the exact decision version. Let~$r$ be a positive real number. Let $P = p_1p_2 \ldots p_n$ be a \mbox{$c$-packed} curve in $\mathbb R^d$ and let $Q = q_1q_2 \ldots q_m$ be a general curve in $\mathbb R^d$. Given $P$, $Q$ and $r$, the exact decision problem is to answer whether $(i)$~$d_F(P,Q) \leq r$ or $(ii)$~$d_F(P,Q) > r$, where $d_F(\cdot, \cdot)$ denotes the Fr\'echet distance. Unfortunately, in our case, we will not be able to decide between~$(i)$ and~$(ii)$ exactly. Therefore, we instead solve the approximate decision version. In the approximate decision problem, we are additionally allowed a third option, that is $(iii)$~to provide a $(1+\varepsilon)$-approximation for $d_F(P,Q)$. 

We will build a decider for the approximate decision version, for any fixed $0 < \varepsilon < \frac 1 2$. Given any $P$, $Q$ and~$r$, the decider returns either $(i)$, $(ii)$ or $(iii)$. The decider requires the \mbox{$c$-packed} curve to be simplified. We will first describe the simplification procedure (Section~\ref{section:decision:simplification}), then we will construct the fuzzy decider (Section~\ref{section:decision:fuzzy}), and finally we will combine two fuzzy deciders into a complete approximate decider (Section~\ref{section:decision:complete}).

\subsection{Simplification}
\label{section:decision:simplification}

The first step in the decision algorithm is to simplify the \mbox{$c$-packed} curve~$P$. We will use the simplification algorithm in Driemel, Har-Peled and Wenk~\cite{Driemel2012}.

\begin{fact}[\cite{Driemel2012}]
\label{fact:simplification}
Given~$\mu > 0$ and a polygonal curve~$\pi = p_1 p_2 p_3 ... p_k$ in $\mathbb R^d$, we can compute in $O(k)$ time a simplification~$\simpl(\pi,\mu)$ with the following properties:
\begin{enumerate}[a)]
    \item for any vertex $p \in \pi$ there exists a vertex $q \in \simpl(\pi,\mu)$ such that $d(p,q) \leq \mu$,
    \item $d_F(\pi,\simpl(\pi,\mu)) \leq \mu$,
    \item all segments in~$\simpl(\pi,\mu)$ have length at least~$\mu$ (except the last),
    \item if $\pi$ is $c$-packed, then $\simpl(\pi,\mu)$ is $6c$-packed.
\end{enumerate}
\end{fact}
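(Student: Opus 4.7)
The plan is to construct $\simpl(\pi,\mu)$ by a one-pass greedy algorithm: initialize $q_1 = p_1$, and having most recently included $q_t = p_i$, let $p_j$ be the smallest-indexed vertex past $p_i$ with $d(p_i, p_j) > \mu$ and set $q_{t+1} = p_j$; always terminate by appending $p_k$. A single left-to-right scan runs in $O(k)$ time.

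Properties~(a) and~(c) should fall out directly from this construction. For (a), any vertex $p_\ell$ of $\pi$ is either kept in the simplification or was skipped at some step, in which case the most recently kept vertex $q_t = p_i$ satisfies $d(p_i, p_\ell) \leq \mu$ by the greedy rule, so the required $q$ exists. For (c), each non-final edge $q_t q_{t+1} = p_i p_j$ has $d(p_i, p_j) > \mu$ by the very rule that selected $p_j$.

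For (b), I would exhibit an explicit Fr\'echet matching on each edge and concatenate. Match the edge $p_i p_j$ of $\simpl(\pi,\mu)$ against the subpath $p_i p_{i+1} \ldots p_j$ of $\pi$ in two phases. In the first phase, the person stands at $p_i$ while the dog traverses $p_i \to p_{i+1} \to \ldots \to p_{j-1}$; at every instant the dog is at a convex combination of two intermediate vertices, each within $\mu$ of $p_i$ by construction, so the separation stays $\leq \mu$. In the second phase, the person walks $p_i \to p_j$ and the dog walks $p_{j-1} \to p_j$ synchronously with parameter $t \in [0,1]$; the separation equals $(1-t)\,d(p_{j-1}, p_i) \leq \mu$. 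The phases join continuously at $(p_i, p_{j-1})$, and concatenating the per-edge matchings yields a global parametrization certifying $\disF{\pi}{\simpl(\pi,\mu)} \leq \mu$.

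The main obstacle is (d), where the goal is $|\simpl(\pi,\mu) \cap B(x,r)| \leq 6cr$ for every ball. My plan is a two-case split on $r$ versus $\mu$. When $r \geq \mu$, for each edge $e$ of $\simpl(\pi,\mu)$ meeting $B(x,r)$ I charge its length in the ball to the corresponding subpath $f_e$ of $\pi$: the per-edge matching from~(b) forces $f_e$ into the $\mu$-neighborhood of $e$, so $f_e \subseteq B(x, r+\mu) \subseteq B(x, 2r)$ whenever $e$ lies entirely in the ball, while a boundary-crossing edge contributes at most $2r$ to the intersection. Combining $|e| \leq |f_e|$ (triangle inequality) with the $c$-packedness of $\pi$ on $B(x, 2r)$ bounds the total by a small multiple of $cr$. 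When $r < \mu$, every non-final edge has length $\geq \mu > r$, so each edge meeting $B(x,r)$ contributes at most $2r$, and bounding the number of such edges again reduces to the $c$-packedness of $\pi$ on a ball of radius $r+\mu = O(\mu)$. The delicate point is tracking constants through both cases and the boundary-crossing terms so that the factor lands on exactly~$6$; this is bookkeeping rather than a new idea.
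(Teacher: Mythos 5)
Your algorithm is exactly the paper's (which simply restates Algorithm~2.1 of Driemel, Har-Peled and Wenk and cites their Lemma~2.3, Remark~2.2 and Lemma~4.3 for properties (b)--(d)), and your arguments for (a), (b) and (c) are correct and complete: in particular the two-phase matching for (b) is valid, since every intermediate vertex $p_\ell$ with $i < \ell < j$ lies within $\mu$ of $p_i$ and convexity of the ball keeps the dog within $\mu$ of the stationary person in phase one, while phase two has separation $(1-t)\,d(p_{j-1},p_i) \leq \mu$.

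The gap is in (d), and it is not only the final constant. In your case $r \geq \mu$ you bound each boundary-crossing edge's contribution by $2r$ but never bound \emph{how many} edges cross the boundary of $B(x,r)$; without that, the sum over crossing edges is uncontrolled. The fix uses the same tool as your $r < \mu$ case: each non-final edge $e$ of the simplification meeting $B(x,r)$ has $|e| \geq \mu$, its subcurve $f_e$ meets $B(x,r+\mu)$, and $f_e$ has length at least $\mu$ inside $B(x,r+2\mu)$, so the number of such edges is at most $c(r+2\mu)/\mu$; the interior-disjointness of the subcurves $f_e$ is what makes this charging legitimate and should be stated. Even with this repaired, the natural execution of your two-case plan (fully-contained edges charged to $\pi \cap B(x,2r)$, giving $2cr$, plus crossing edges) lands at a constant around $8c$ rather than $6c$, and the short final edge needs a separate (harmless) term. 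Since the paper only ever uses that the simplification is $O(c)$-packed (Lemma~\ref{lemma:lds} absorbs the constant), your argument suffices for everything downstream, but as a proof of the stated Fact with the constant $6$ it is incomplete; the paper sidesteps this entirely by citing Lemma~4.3 of~\cite{Driemel2012}.
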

\begin{proof}
We state Algorithm 2.1 from~\cite{Driemel2012}, since we will use it in Section~\ref{section:optimisation:simplification} to determine the critical values of our algorithm. Mark the initial vertex $p_1$ and set it as the current vertex. Scan the polygonal curve from the current vertex until it reaches the first vertex $p_i$ that is at least $\mu$ away from the current vertex. Mark $p_i$ and set it as the current vertex. Repeat this until the final vertex, and mark the final vertex. Set the marked vertices to be the simplified curve, and denote it as $\simpl(\pi, \mu)$. See~\Cref{fig:simplification}. Fact 2a follows from Algorithm~2.1 in~\cite{Driemel2012}. Facts 2b, 2c and~2d follow from Lemma~2.3, Remark~2.2 and Lemma~4.3 in~\cite{Driemel2012}. 
\end{proof}

\begin{figure}[ht]
    \centering
    \includegraphics[width=0.7\textwidth]{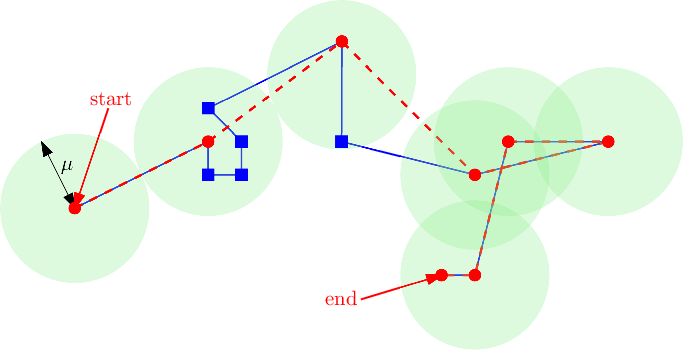}
    \caption{A polygonal trajectory $P$ (blue) and its $\mu$-simplification (red dashed). The vertices marked with blue squares are on $P$ but not included in the simplification.}
    \label{fig:simplification}
\end{figure}

\subsection{Fuzzy decider}
\label{section:decision:fuzzy}

The second step in the decision algorithm is to construct a fuzzy decider. Let $\varepsilon' = \varepsilon / 30$. Given $P$, $Q$ and~$r$, the fuzzy decision problem is to answer whether $(i)$ $d_F(P,Q) \leq (1+\varepsilon'/2)r$, or $(ii)$ $d_F(P,Q) > (1 - 2\varepsilon') r$. We call the decision problem fuzzy as there is a fuzzy region $\left( \left(1 - 2 \varepsilon' \right) r, \left(1+\varepsilon'/2 \right)r \right]$ where it would be acceptable to return either~$(i)$ or~$(ii)$. Note that unlike the complete approximate decider, for the fuzzy decider, there is no option~$(iii)$.

The overall approach in the fuzzy decider is to approximate the optimal walks along $K$ and $Q$, where~$K$ is the simplification of~$P$ from Fact~\ref{fact:simplification}. In particular, our approach is to guess how far along~$K$ we are when we reach vertex~$q_i$ on~$Q$. We use a layered directed graph to model the walk along $K$, where each layer corresponds to the walk reaching~$q_i$ on~$Q$.

The fuzzy decision algorithm constructs a layered directed graph and searches it for a suitable walk. We divide the fuzzy decision algorithm into three steps. The first step is to query a range searching data structure~\cite{Schwarzkopf1996} to construct the vertices of the graph (Section~\ref{subsubsection:vertices}). The second step is to query an approximate Fr\'echet distance data structure~\cite{Driemel2013} to construct the directed edges of the graph (Section~\ref{subsubsection:edges}). The third step is to run a breadth first search and then to return either~$(i)$ or~$(ii)$ (Section~\ref{subsubsection:returning}).

\subsubsection{Constructing the vertices}
\label{subsubsection:vertices}

The first step in the fuzzy decider is to construct the vertices of the layered directed graph. Let $\delta = \varepsilon'/2 = \varepsilon / 60$. Construct the simplification $K = \simpl(P, \delta r)$ using Fact~\ref{fact:simplification}. Recall that layer~$i$ corresponds to candidate positions on~$K$ when we reach~$q_i$ on~$Q$. Formally, define layer~$i$ to be $W_i = \{w_{i,j}\}$. All points $w_{i,j} \in W_i$ satisfy~$w_{i,j} \in K$ and~$d(w_{i,j},q_i) \leq 2r$. Note that $w_{i,j}$ is not necessarily a vertex of $P$, but rather a point on an edge of $K = \simpl(P,\delta r)$. To construct~$W_i$, we require the data structure of Schwarzkopf and Vleugels~\cite{Schwarzkopf1996}, which is a range searching data structure for low density environments.

\begin{definition}
\label{definition:low_density}
A set of objects $\Sigma$ is $k$-low density if, for every box~$H$, there are at most $k$ objects in~$H$ that intersect it and are larger than it. The size of an object is the size of its smallest enclosing box.
\end{definition}

\begin{fact}[Theorem 3 in~\cite{Schwarzkopf1996}]
\label{fact:schwarzkopf}
A $k$-low density environment $\Sigma$ of $n$ objects in $\mathbb R^d$ can be stored in a data structure of size $O(n \log^{d-1} n + kn)$, such that it takes $O(\log^{d-1} n + k)$ time to report all $E \in \Sigma$ that contains a given query point $q \in \mathbb R^d$. The data structure can be computed in $O(n \log^d n + k n \log n)$ time.
\end{fact}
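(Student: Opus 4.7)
The plan is to build a hierarchical space decomposition in which every input object is assigned to tree nodes whose cells match the object's scale, and then exploit the low-density hypothesis to bound both storage and query cost. This follows the classical template for range searching in ``fat'' or low-density scenes.

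First, I would construct a balanced box-decomposition tree (BBD-tree) over the bounding box of $\Sigma$, of depth $O(\log n)$, whose cells are boxes (or boxes minus inner boxes) that shrink geometrically along any root-to-leaf path. Each object $E \in \Sigma$ is then classified by scale: $E$ is stored at every tree node whose cell size is within a constant factor of the size of $E$ and whose cell $E$ intersects. Since $k$-low density asserts that, for any cell $H$, at most $k$ objects of $\Sigma$ intersect $H$ and are larger than $H$, each node receives $O(k)$ assignments from objects ``larger than'' its cell, and only $O(1)$ assignments at the object's own scale.

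Second, I would charge storage and query time. The tree has $O(n)$ combinatorial cells, but supporting point location on axis-aligned boxes in $\mathbb{R}^d$ contributes a factor $\log^{d-1} n$ via a multi-level search structure, so the bookkeeping storage is $O(n \log^{d-1} n)$; object assignments contribute an additional $O(kn)$ by the low-density counting above. For a query $q$, locate the leaf cell containing $q$ in $O(\log^{d-1} n)$ time through the point-location subroutine, then walk up the $O(\log n)$ ancestors and test $q$ against the $O(k)$ objects stored at each visited ancestor. The crucial observation is that any object containing $q$ is large relative to sufficiently small cells along the search path, so the tests at different levels do not duplicate work and the total output-sensitive bound $O(\log^{d-1} n + k)$ emerges. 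Construction time decomposes as $O(n \log^d n)$ for building the BBD-tree together with its point-location structure, plus $O(kn \log n)$ to route each object to its classification cells via $O(\log n)$ point/box queries.

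The main obstacle I expect is shaving the query bound from the naive $O(\log^d n + k)$ down to $O(\log^{d-1} n + k)$. The naive traversal of a $d$-dimensional interval tree costs $\log^d n$, and saving the extra logarithmic factor needs a point-location subroutine tailored to BBD-trees (for instance, via persistence or fractional cascading across coordinate-wise secondary structures) together with an argument that an ancestor is visited only when it is charged to actual output, so that the $O(k)$ containment tests are amortised against the reported answers rather than against the $\Theta(\log n)$ ancestors. Verifying this amortisation under the $k$-low-density assumption is the technical heart of the Schwarzkopf--Vleugels construction; everything else (the BBD-tree skeleton, the scale-based object assignment, and the global counting) is standard.
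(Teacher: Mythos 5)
First, a point of comparison: the paper does not prove this statement at all. It is Fact~\ref{fact:schwarzkopf}, quoted as Theorem~3 of Schwarzkopf and Vleugels~\cite{Schwarzkopf1996} and used strictly as a black box. So your text has to stand on its own as a reconstruction of the cited result, and as such it has a genuine gap --- one you have in fact located yourself but not closed.

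The gap is the query bound. Under your scale-based assignment, a query locates the leaf containing $q$ and then tests the $O(k)$ objects stored at each of the $O(\log n)$ visited ancestors, which yields $O(\log^{d-1} n + k\log n)$, not the claimed $O(\log^{d-1} n + k)$. Your proposed repair --- amortising the per-ancestor containment tests ``against the reported answers'' --- cannot work as stated, because the structure must meet the time bound even when the output is empty or small; indeed, Lemma~\ref{lemma:lds} in this paper relies on the $O(\log^{d} n + c\delta^{-1})$ query time precisely in situations where few or no troughs contain the query point, and an output-charging argument gives no bound there. The construction in~\cite{Schwarzkopf1996} avoids the problem by a different route: one shows that the axis-parallel bounding boxes of a $k$-low-density scene again form an $O(k)$-low-density scene, builds a multi-level point-enclosure structure on these $n$ boxes (this is the source of the $\log^{d-1} n$ factors in space and query time), and then observes that applying the low-density condition to the degenerate box $\{q\}$ bounds by $O(k)$ the number of candidate bounding boxes containing $q$, each of which is checked against its object in $O(1)$ time. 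That is, the $+k$ term is controlled globally by low density at the query point itself, rather than by a per-level count that must subsequently be amortised. Without that argument (or a worked-out substitute for your amortisation claim), your proposal establishes only the weaker $O(\log^{d-1} n + k\log n)$ query time, and the $O(n\log^{d}n + kn\log n)$ preprocessing and $O(n\log^{d-1}n + kn)$ space bounds are likewise asserted rather than derived.
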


We apply Fact~\ref{fact:schwarzkopf} to the curve~$K$. In particular, we turn $K$ into a low-environment in~$\mathbb R^{d+1}$ by using the trough construction of Gudmundsson, Seybold and Wong~\cite{DBLP:journals/talg/GudmundssonSW24}. The same trough construction was also used in~\cite{DBLP:conf/compgeom/BuchinBG0W24}.

\begin{lemma}
\label{lemma:lds}
Let $\delta > 0$ be fixed. Let~$P$ be a~\mbox{$c$-packed} curve with~$n$ vertices in $\mathbb R^d$. Let~$K = \simpl(P,\delta r)$. We can preprocess~$K$ into a data structure of $O(n \log^d n + c \delta^{-1} n)$ size, so that given a query point $q \in \mathbb R^d$, the data structure can return in $O(\log^d n + c \delta^{-1})$ time all~$O(c \delta^{-1})$ edges of~$K$ that are within a distance of $2r$ from~$q$. The preprocessing time is $O(n \log^{d+1}n + c \delta^{-1} n \log n)$.
\end{lemma}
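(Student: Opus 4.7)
The plan is to reduce the query to point location in a low-density environment and then invoke Fact~\ref{fact:schwarzkopf}. By Fact~\ref{fact:simplification}, the simplification $K = \simpl(P,\delta r)$ has $O(n)$ vertices, is $6c$-packed, and has every edge (except possibly the last) of length at least $\delta r$. These three properties are exactly what the trough construction of \cite{DBLP:journals/comgeo/GudmundssonSW23} exploits to lift the edges of $K$ into objects in $\mathbb{R}^{d+1}$ that form a $k$-low-density environment with $k = O(c\delta^{-1})$.

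Concretely, for each edge $e$ of $K$ I would build a \emph{trough} $T_e \subset \mathbb{R}^{d+1}$ together with a lift $\hat q \in \mathbb{R}^{d+1}$ of the query point $q$ so that $\hat q \in T_e$ if and only if $d(q, e) \le 2r$. The defining property of the trough, introduced in \cite{DBLP:journals/comgeo/GudmundssonSW23} and reused in \cite{DBLP:conf/compgeom/BuchinBG0W24}, is that the smallest enclosing axis-aligned box of $T_e$ has diameter $\Theta(|e|)$, so the size of $T_e$ is governed by the length of $e$ rather than by the query radius $2r$. Reporting the edges of $K$ within distance $2r$ of $q$ then reduces to reporting the troughs that contain $\hat q$, which is a point-in-object query over $O(n)$ objects in $\mathbb{R}^{d+1}$.

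The main obstacle is verifying the $O(c\delta^{-1})$-low-density bound. Fix a query box $H$ of diameter $\ell$, and consider the troughs $T_e$ larger than $H$ that intersect it; these must satisfy $|e| = \Omega(\ell)$, and the corresponding edge $e$ meets a ball $B$ of radius $O(\ell)$ centred near $H$. The $6c$-packedness of $K$ bounds the total length of $K \cap B$ by $O(c\ell)$, while each contributing edge has length at least $\delta r$. At the only scale where the question is nontrivial, namely $\ell = \Omega(\delta r)$, a careful case analysis of how much of each edge lies inside the relevant ball (entire edge versus crossing-segment) shows that there are at most $O(c\ell / (\delta r)) = O(c\delta^{-1})$ such troughs; the delicate point is keeping the packedness scale $\ell$ and the length scale $\delta r$ aligned so that no spurious factor of $r/\ell$ or $\ell/r$ appears.

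Once $k = O(c\delta^{-1})$ is established, plugging $n$ objects in dimension $d+1$ into Fact~\ref{fact:schwarzkopf} yields space $O(n\log^d n + c\delta^{-1} n)$, query time $O(\log^d n + c\delta^{-1})$, and preprocessing time $O(n\log^{d+1} n + c\delta^{-1} n\log n)$, matching the bounds claimed in the lemma. The $O(c\delta^{-1})$ bound on the output size follows from the same packedness-plus-length-lower-bound argument applied directly to the ball $B(q,2r)$.
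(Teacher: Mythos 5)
Your proposal follows essentially the same route as the paper: lift each edge of $K$ to a trough in $\mathbb R^{d+1}$ whose size is governed by $|e|$, observe that $6c$-packedness plus the $\delta r$ lower bound on edge lengths makes the resulting environment $O(c\delta^{-1})$-low-density, and then apply Fact~\ref{fact:schwarzkopf}. Two small remarks. First, the paper does not reprove the low-density bound but cites it directly (Lemma~23 of~\cite{DBLP:journals/comgeo/GudmundssonSW23}); your sketch of that bound is in the right spirit but has the scales slightly off --- a trough ``larger than $H$'' only forces $|e| = \Omega(\delta\ell)$, and the relevant ball has radius depending on the height at which $H$ sits, not just on $\ell$ --- so if you wanted a self-contained proof you would need the more careful analysis from that lemma. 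Second, your claimed equivalence ``$\hat q \in T_e$ iff $d(q,e) \le 2r$'' silently uses $|e| \ge \delta r$ (the containment condition also requires $|e|$ to be at least a constant fraction of $\delta r$), which fails for the possibly short last edge of $K$; the paper patches this by testing the last edge separately, and you should do the same.
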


\begin{proof}
The curve~$K$ is $6c$-packed by Fact~\ref{fact:simplification}d. Next, we generalise the trough construction of Gudmundsson, Seybold and Wong~\cite{DBLP:journals/talg/GudmundssonSW24} to $(d+1)$-dimensions. 
We define a trough object in $\mathbb R^{d+1}$ for every segment~$e \in K$ by $\trough(e, \delta) = \{\left(x_1,\ldots,x_d,z\right): d\left(\left(x_1,\ldots,x_d\right),e\right) \leq 4z \leq 8 \delta^{-1} |e|\}$, where $d(\cdot,\cdot)$ and $|\cdot|$ are measured under the Euclidean metric in $\mathbb R^d$. Let $T$ be the set of all trough objects. By Lemma~23 in~\cite{DBLP:journals/comgeo/GudmundssonSW23},~$T$ is an $O(c \delta^{-1})$-low-density environment. We apply the data structure from Fact~\ref{fact:schwarzkopf} on the environment~$T$.  

Given a query point $q = (x_1,\ldots,x_d)$, we query the data structure for all troughs that contain the $(d+1)$-dimensional point $(x_1,\ldots,x_d,r/2)$. Suppose the data structure returns a set of $k$ objects $\{\trough(e_{i}, \delta) \}^k_{i=1}$. Then $k = O(c \delta^{-1})$, since $T$ is an $O(c \delta^{-1})$-low-density environment, and $q$ has zero size. From the set of~$k$ troughs we extract the set of~$k$ edges $\{e_{i}\}^k_{i=1}$.

The running times follow from Fact~\ref{fact:schwarzkopf} and from $T$ being an $O(c \delta^{-1})$-low-density environment. It remains to prove the correctness of the query. Recall the definition of the trough that $(x_1,\ldots,x_d,r/2) \in \trough(e, \delta)$ if and only if $d\left(\left(x_1,\ldots,x_d\right),e\right) \leq 4\cdot \frac{r}{2} \leq 8\delta^{-1}|e|$.
In particular, $d\left(q,e\right) \leq 2r$ covers all edges in~$K$ that intersect a ball of radius $2r$ centred at $q$ and $4\delta^{-1}|e| \geq r$ covers all edges of length at least $\delta r/4$. Since $K$ is also $(\delta r)$-simplified, all edges of $K$ (except for the last edge) are at least of length $\delta r$ by Fact~\ref{fact:simplification}c. We can check the last edge of~$K$ separately. 
\end{proof}

We use Lemma~\ref{lemma:lds} to construct~$W_i$ for all $1 \leq i \leq m$. Recall that $Q = q_1 \ldots q_m$. Query the data structure in Lemma~\ref{lemma:lds} to obtain all edges in $K = \simpl(P,\delta r)$ that are within a distance of $2r$ from $q_i$. Let this set of edges be $T_i$. Note that $|T_i| = O(c \delta^{-1})$, since $K$ is $6c$-packed and each edge in~$T_i$ has length at least~$\delta r$. For each edge $e_{i, j} \in T_i$, we choose $O(\delta^{-1})$ evenly spaced points on the chord $e_{i,j} \cap B(q_i,2r)$, so that the distance between two consecutive points on the chord is less than $\delta r$. We add these evenly spaced points to $W_i$ for each $e_{i,j} \in T_i$, so that in total, $|W_i| = O(c \delta^{-2})$. See \Cref{fig:fuzzy-decider-local}.

\begin{figure}[ht]
    \centering
    \includegraphics[width=0.7\textwidth]{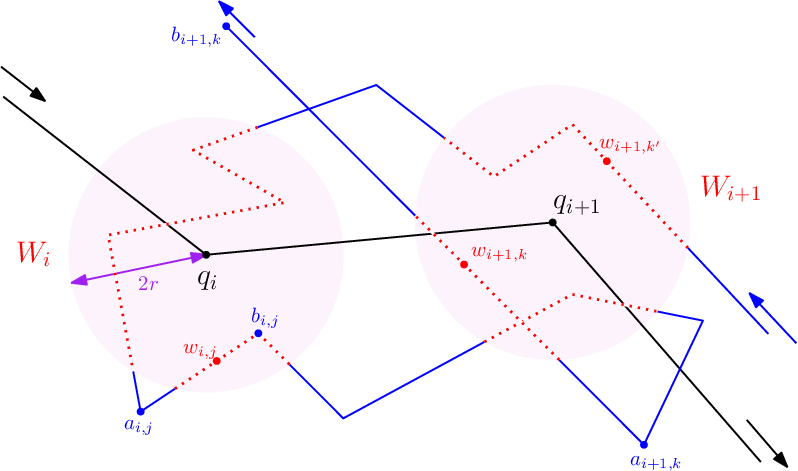}
    \caption{The general curve $Q$ (black), the $(\delta r)$-simplification $K$ (blue) and two candidate sets $W_i$ and $W_{i+1}$ (red dots). The coloured arrows indicate the order of vertices on the curve.
    A candidate set $W_i$ (red dots) contains evenly spaced points on $K$ chords that are at most a distance $2r$ away from $q_i$, i.e., in the violet shading.
    The point $w_{i,j}$ is on the edge $a_{i,j}b_{i,j}$ and the point $w_{i+1,k}$ is on the edge $a_{i+1,k}b_{i+1,k}$.
    }
    \label{fig:fuzzy-decider-local}
\end{figure}

This completes the construction of~$W_i$ for $1 \leq i \leq m$.  Since $q_1, q_{m}$ must be matched to $p_1, p_n$ respectively, we can simplify the sets $W_1 = \{p_1\}$ and $W_m = \{p_n\}$. The vertices of our graph are $\cup_{i=1}^m W_i$, which completes the first step of the construction of the fuzzy decider.

\subsubsection{Constructing the edges}
\label{subsubsection:edges}

The second step in the fuzzy decider is to construct the edges of the layered directed graph. Each edge in the graph is a directed edge from $W_i$ to $W_{i+1}$ for some $1 \leq i \leq m-1$. A directed edge from $w_{i,j} \in W_i$ to $w_{i+1,k} \in W_{i+1}$ models a simultaneous walk, from $w_{i,j}$ to $w_{i+1,k}$ and from $q_i$ to~$q_{i+1}$, on~$K$ and~$Q$ respectively. We only add this directed edge into the graph if its associated walk is feasible. To decide whether the walk is feasible, we check two conditions. The first condition is that~$w_{i,j}$ preceeds~$w_{i+1,k}$ along the curve $K$. The second condition is whether the Fr\'echet distance between the subcurve~$K \langle w_{i,j}, w_{i+1,k} \rangle$ and the segment~$q_i q_{i+1}$ is at most~$r$. See \Cref{fig:overview-matching}. To efficiently check the second condition, we require the approximate Fr\'echet distance data structure of Driemel and Har-Peled~\cite{Driemel2013}. 

\begin{figure}[ht]
    \centering
    \includegraphics[width=0.5\textwidth]{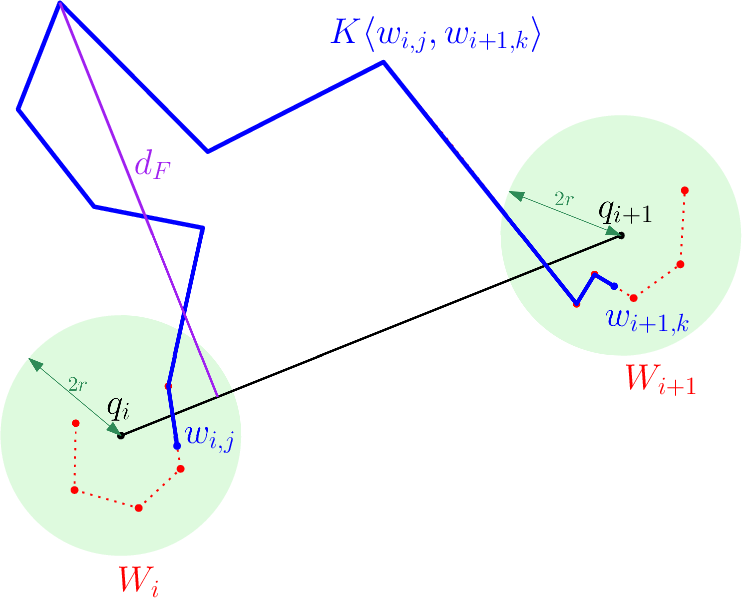}
    \caption{The Fr{\'e}chet distance (purple), between a segment $(q_i, q_{i+1})$ (black) and subcurve $K \langle w_{i, j}, w_{i+1, k} \rangle $ (blue). A candidate set $W_i$ (red dots) contains evenly spaced points on $K$ chords that are at most a distance $2r$ away from $q_i$, i.e., in the green shading.}
    \label{fig:overview-matching}
\end{figure}

\begin{fact}[Theorem 5.9 in \cite{Driemel2013}]
\label{fact:a-segment-query-to-a-subcurve}
Given $\delta > 0$ and a polygonal curve $Z$ with $n$ vertices in $\mathbb{R}^d$, one can construct a data structure in $O(\delta^{-2d} \log^2(1/\delta) n \log^2 n)$ time that uses $O(\delta^{-2d} \log^2(1/\delta) n)$ space, such that for a query segment $pq$, and any two points $u$ and $v$ on the curve, one can $(1 + \delta)$-approximate the distance $d_F(Z \langle u,v \rangle , pq)$ in $O(\delta^{-2} \log n \log \log n)$ query time. 
\end{fact}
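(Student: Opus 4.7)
The plan is to follow a standard segment-tree-with-approximate-oracles template. I would build a balanced binary tree $\mathcal T$ over the edges of $Z$; every internal node $\nu$ is associated with a contiguous canonical subcurve $Z_\nu$, and any query subcurve $Z\langle u,v\rangle$ decomposes into $O(\log n)$ canonical pieces plus two partial edges that can be handled separately. Because $pq$ is a single segment, an optimal Fr\'echet matching of $Z\langle u,v\rangle$ to $pq$ partitions $pq$ into $O(\log n)$ consecutive sub-segments, one per canonical piece, so the query reduces to a monotone merge of feasibility intervals on $pq$ obtained from each canonical piece.

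At each node $\nu$ I would precompute a logarithmic ladder of simplifications $\simpl(Z_\nu,\mu_k)$ at geometric scales $\mu_k = (1+\delta)^k \mu_0$ spanning the plausible range of $d_F$. Alongside each simplification I would store the combinatorial structure of its free-space diagram against an arbitrary query segment: a $\delta$-net over direction space of size $O(\delta^{-(d-1)})$ and, for each direction class, a sorted list of critical events together with the feasibility interval they induce on the segment. A counting argument for the free-space diagram of a segment versus a simplified polyline bounds the events per simplification by $O(\delta^{-2d}\log(1/\delta))$; summing across $\mathcal T$ and across the $O(\log(1/\delta))$ ladder levels, together with an extra $\log n$ factor for the tree depth and a sort per node, gives the claimed $O(\delta^{-2d}\log^2(1/\delta) n)$ space and $O(\delta^{-2d}\log^2(1/\delta) n \log^2 n)$ construction time.

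To answer a query $(pq,u,v)$, first locate the $O(\log n)$ canonical pieces of $Z\langle u,v\rangle$ in $\mathcal T$. For each piece, a crude lower bound on $d_F$ from the distances of the piece's endpoints to $pq$ restricts attention to a short range of ladder scales; fractional cascading across the ladders then pins down the correct $\mu_k$ in $O(\log\log n)$ time per piece. On the chosen simplification, evaluate the stored direction profile against $pq$ in $O(\delta^{-2})$ time to produce a feasibility interval on $pq$, and merge the $O(\log n)$ resulting intervals left-to-right. A binary search over the threshold $\tau$ around the returned value produces the $(1+\delta)$-approximation within the target $O(\delta^{-2}\log n\log\log n)$ query time.

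The main obstacle is proving that the left-to-right merge of canonical-piece feasibility intervals genuinely $(1+\delta)$-approximates $d_F(Z\langle u,v\rangle, pq)$, rather than compounding per-level error across the $O(\log n)$ pieces. The key lemma to establish is a composition statement: for every threshold $\tau$, a $\tau$-matching of the concatenated subcurve to $pq$ exists if and only if each canonical piece admits a $\tau$-matching whose entry interval on $pq$ meets the exit interval of its predecessor. Once this composition lemma is in place, using a single global $\delta$ across all levels (rather than splitting it among them) prevents error amplification, while the geometric spacing of the $\mu_k$ guarantees that the selected scale is within a $(1+\delta)$ factor of the optimum, closing the analysis.
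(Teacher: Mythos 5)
This statement is a \emph{Fact} that the paper imports verbatim from Driemel and Har-Peled (Theorem~5.9 of the cited work); the paper under review offers no proof of it, so there is no in-paper argument to match. Judged on its own terms, your sketch also takes a genuinely different architecture from the cited source: Driemel and Har-Peled do not decompose the query subcurve into $O(\log n)$ canonical pieces and compose reachability intervals. Instead they first obtain a constant-factor approximation of $d_F(Z\langle u,v\rangle, pq)$ from endpoint and vertex distances, then jump to the simplification in a precomputed geometric hierarchy whose parameter is about $\delta$ times that value, and $(1+\delta)$-approximate the distance of the \emph{whole} simplified subcurve to the segment in one shot, using a net of $O(\delta^{-2})$ directions together with prefix-extremum/successor structures over the vertex sequence (this is where the $\log n \log\log n$ comes from). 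Handling the subcurve monolithically at the right scale is precisely what lets them avoid the composition issue you flag.

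The core of your proposal has a genuine gap. The per-node precomputed object --- ``a $\delta$-net over direction space and, for each direction class, a sorted list of critical events together with the feasibility interval they induce on the segment'' --- is not well defined: the free-space diagram of a canonical subcurve $Z_\nu$ against a query segment $pq$ depends on the position, length and threshold of $pq$, not merely on its direction, so a direction net alone cannot be evaluated at query time to produce the entry-to-exit interval map in $O(\delta^{-2})$ time. The accompanying claim that the number of ``events'' per simplification is $O(\delta^{-2d}\log(1/\delta))$ is asserted only so that the space bound comes out right; no counting argument is given, and it is unclear what query-independent events are being counted. Finally, you correctly identify the composition lemma as the remaining obstacle, but it remains unresolved: with exact reachability intervals the left-to-right merge is just Alt--Godau propagation in a single free-space column and is exact, whereas with $(1+\delta)$-approximate per-piece oracles the propagated entry intervals are themselves perturbed, and you give no argument that these perturbations do not accumulate over the $O(\log n)$ pieces. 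As it stands, the proposal is a plausible-looking template whose two load-bearing components (the per-node segment oracle and the approximate composition) are both missing.
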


We construct the data structure in Fact~\ref{fact:a-segment-query-to-a-subcurve} on the curve~$K$. Let $1 \leq i \leq m-1$, $w_{i,j} \in W_i$ and $w_{i+1,k} \in W_{i+1}$. We query the data structure in Fact~\ref{fact:a-segment-query-to-a-subcurve} to compute a $(1+\delta)$-approximation of $d_F(K \langle w_{i,j}, w_{i+1,k} \rangle, q_i q_{i+1})$. If the reported value is at most~$r$, then we insert the directed edge from $w_{i,j}$ to $w_{i+1,k}$. We repeat this for all $1 \leq i \leq m-1$, $w_{i,j} \in W_i$ and $w_{i+1,k} \in W_{i+1}$. This completes the construction of the edges in the directed graph, and completes the second step of the fuzzy decider.

\subsubsection{Returning either \texorpdfstring{$(i)$}{(i)} or \texorpdfstring{$(ii)$}{(ii)}}
\label{subsubsection:returning}

The third step of the fuzzy decider is to run a breadth first search on the layered directed graph. Recall that $W_1 = \{p_1\}$ and $W_m = \{p_n\}$. We use the breadth first search to decide whether there is a directed path from $p_1$ to $p_n$. Recall that $\varepsilon' = \varepsilon/30$ and $\delta = \varepsilon / 60$. If there is a directed path, we return $(i)$~$d_F(P,Q) \leq (1 + \varepsilon'/2) r$. Otherwise, we return $(ii)$~$d_F(P,Q) > (1 - 2\varepsilon') r$. Next, we prove the correctness of the fuzzy decider. We have two cases.

\begin{itemize}
    \item There is a directed path from $p_1$ to $p_n$ in the layered directed graph. Let the directed path be $c_1 \ldots c_m$. Then $c_i \in W_i$ for all $1 \leq i \leq m$. We match the vertex $q_i$ to $c_i$ for all $1 \leq i \leq m$. We match the segment $q_i q_{i+1}$ to the subcurve $K\langle c_i, c_{i+1} \rangle$ for all $1 \leq i \leq m-1$. Since there is a directed edge from $c_i$ to $c_{i+1}$, we have that the estimated Fr\'echet distance between the segment $q_i q_{i+1}$ and the subcurve $\disF{K \langle c_{i},c_{i+1} \rangle }{q_{i}q_{i+1}}$ is at most $r$. Formally, we have $C_i \leq r$, where
    \[
          \disF{K \langle c_{i},c_{i+1} \rangle }{q_{i}q_{i+1}} \leq C_i \leq (1+\delta) \cdot \disF{K \langle c_{i},c_{i+1} \rangle }{q_{i}q_{i+1}}.
    \]
   In particular, we have $\disF{K \langle c_{i},c_{i+1} \rangle }{q_{i}q_{i+1}} \leq r$. Taking the maximum over all $1 \leq i \leq m-1$, we get
    \[
        \disF{K}{Q} \leq \max_{i=1, \ldots, m-1} \disF{K \langle c_{i},c_{i+1} \rangle }{q_{i}q_{i+1}} \leq  r.
    \]
    By Fact~\ref{fact:simplification}b, we have $\disF{P}{K} \leq \delta r$. Since the Fr{\'e}chet distance obeys the triangle inequality, we have 
    \[
        \disF{P}{Q} \leq \disF{P}{K}  + \disF{K}{Q} \leq  r + \delta r \leq (1+\varepsilon'/2) r. 
    \]
    Therefore, it is correct to return $(i)$~$d_F(P,Q) \leq (1 + \varepsilon'/2) r$ in the case where there is a directed path from $p_1$ to $p_n$.
    
    \item There is no directed path from $p_1$ to $p_n$ in the layered directed graph. Let $r^{*} = \disF{P}{Q}$. Suppose that for the optimal Fr\'echet distance between $P$ and $Q$, we match $q_i \in Q$ to $p_i^{*} \in P$ for all $1 \leq i \leq m$. Therefore $\dis{}{q_i}{p_i^{*}} \leq r^{*}$. Let $r' = \disF{K}{Q}$. Suppose that for the optimal Fr\'echet distance between~$K$ and~$Q$, we match $q_i \in Q$ to ${k}_i^{*} \in K$ for all $1 \leq i \leq m$. Therefore $\dis{}{q_i}{{k}_i^{*}} \leq r'$. Since the Fr{\'e}chet distance obeys the triangle inequality, we have $r' = \disF{K}{Q} \leq \disF{P}{Q} + \disF{K}{P} = r^{*} + \delta r$.

    Assume for the sake of contradiction that $r^{*} \leq (1 - 2 \varepsilon'
    ) r$.
    Then, we have \[r' \leq r^{*} + \delta r \leq \left(\delta + 1 - 2 \varepsilon'
    \right) r <  r < 2r.\]
    Therefore, $\dis{}{q_i}{{k}_i^{*}} \leq r' < 2r$.
    Thus, there exists ${k}_i^{*} \in K$ that is at most a distance $2r$ away from $q_i$ and the edge that ${k}_i^{*}$ resides on is also at most $2r$ away from $q_i$. Hence, $W_i$ is non-empty, and there exists $u_i \in W_i$ such that $u_i$ and ${k}_i^{*}$ share the same chord (edge) in $K$ and $\dis{K}{u_i}{{k}_i^{*}} \leq \delta r$. In particular, there exists $u_{i},v_{i} \in W_i$ where $k_i^*$ is on the subcurve~$K \langle u_i, v_i \rangle$, so that $\dis{K}{u_{i}}{{k}_i^{*}} \leq \delta r$, $\dis{K}{v_{i}}{{k}_i^{*}} \leq \delta r$, and $\dis{K}{u_{i}}{v_{i}} \leq \delta r$. See~\Cref{fig:break_r_prime_i}.
    
    \begin{figure}[ht]
        \centering
        \includegraphics[width=0.7\textwidth]{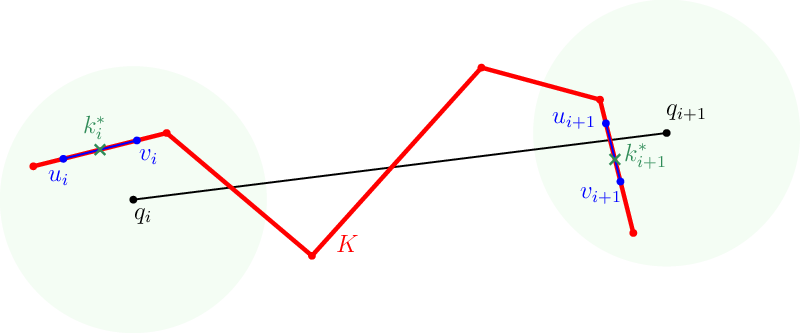}
        \caption{The point $k^{*}_i$ marked with a green cross, and its immediate neighbour points $u_i$ and $v_i$, marked with blue dots. Note that $k^{*}_i$ is on the subcurve~$K \langle u_i, v_i \rangle$, so that $\dis{K}{u_{i}}{{k}_i^{*}} \leq \delta r$, $\dis{K}{v_{i}}{{k}_i^{*}} \leq \delta r$, and $\dis{K}{u_{i}}{v_{i}} \leq \delta r$.}
        \label{fig:break_r_prime_i}
    \end{figure}
    
    Consider $r_i' = \disF{{K}\langle u_{i}, u_{i+1}\rangle }{q_{i}q_{i+1}}$ when $q_{i} \in Q$ is matched to $u_i \in W_i \subset K$ and $q_{i+1} \in Q$ is matched to $u_{i+1} \in W_{i+1} \subset K$. Then
    \begin{align*}
        r_i' 
        &\leq
        \disF{{K}\langle {k}_{i}^{*}, {k}_{i+1}^{*}\rangle }{q_{i}q_{i+1}} +
        \disF{{K}\langle {k}_{i}^{*}, {k}_{i+1}^{*}\rangle }{{K}\langle u_{i}, u_{i+1}\rangle } \\
        &\leq
        r' +
        \disF{{K}\langle {k}_{i}^{*}, {k}_{i+1}^{*}\rangle }{{K}\langle u_{i}, u_{i+1}\rangle } \\
        &\leq r' +
        \disF{{k}_{i}^{*} \circ {K}\langle {k}_{i}^{*}, u_{i+1}\rangle  \circ u_{i+1} {k}_{i+1}^{*} }
        {u_{i}{k}_{i}^{*} \circ {K}\langle {k}_{i}^{*}, u_{i+1}\rangle  \circ u_{i+1}} \\
        &\leq r' + \max \left\{
            \disF{{k}_{i}^{*}}{u_{i}{k}_{i}^{*}},
            \disF{{K}\langle {k}_{i}^{*}, u_{i+1}\rangle }{{K}\langle {k}_{i}^{*}, u_{i+1}\rangle },
            \disF{u_{i+1} {k}_{i+1}^{*}}{u_{i+1}}
        \right\} \\
        &\leq r' + \max \left\{
            \dis{K}{{k}_{i}^{*}}{u_{i}},
            0,
            \dis{K}{{k}_{i+1}^{*}}{u_{i+1}}
        \right\} \\
        &\leq r' + \max \left\{
            \delta r,
            0,
            \delta r
        \right\} \\
        &\leq r' + \delta r
    \end{align*}
    where $\circ$ denotes the concatenation of polygonal curves. Therefore, 
    \[
    r_i' \leq r' + \delta r \leq \left(\delta +
    1 - 2 \varepsilon' + \delta
    \right) r = \left(2\delta +
    1 - 2 \varepsilon'
    \right) r \leq \left(
    1 - \varepsilon'
    \right) r.
    \]
    
    Let $C_i$ be the $(1+\delta)$-approximation of $\disF{{K}\langle u_{i}, u_{i+1}\rangle }{q_{i}q_{i+1}}$ returned by the data structure in Fact~\ref{fact:a-segment-query-to-a-subcurve}. Then, $r_i' \leq C_i \leq (1+\delta) r_i'$. Therefore, $C_i \leq (1+\delta) r_i' <  (1+\varepsilon') r_i' \leq (1+\varepsilon') (
    1 - \varepsilon'
    ) r < r$ for all $1 \leq i \leq m$. Hence, there is a directed edge from $u_i$ to $u_{i+1}$ in the layered directed graph, for all $1 \leq i \leq m-1$. In particular, $u_1 \ldots u_m$ is a directed path from $p_1$ to $p_n$, which is a contradiction. We conclude that our assumption $r^{*} \leq (
    1 - 2 \varepsilon'
    ) r$ cannot hold, and it is correct to return $r^{*} > (
    1 - 2 \varepsilon'
    ) r$ in the case where there is no directed path from $p_1$ to $p_n$.
\end{itemize}

We obtain the following theorem.

\begin{theorem}[Fuzzy decider]
\label{theorem:fuzzy_decider}
Given a positive real number $r$, $0 < \varepsilon < \frac 1 2$, and a $c$-packed curve $P$ with $n$ vertices in~$\mathbb R^d$, one can construct a data structure in $O(n \log^{d+1} n + c \varepsilon^{-1} n \log n + \varepsilon^{-2d} \log^2(1/\varepsilon) n \log^2 n)$ time that uses $O(n \log^d n + c \varepsilon^{-1} n + \varepsilon^{-2d} \log^2 (1/\varepsilon) n)$ space, so that given a query curve~$Q$ with~$m$ vertices, the data structure returns in $O(\log^d n + m c^2 \varepsilon^{-6} \log n \log \log n)$ query time either $(i)$~$d_F(P,Q) \leq (1 + \varepsilon'/2) r$ or $(ii)$~$d_F(P,Q) > (1 - 2 \varepsilon') r$.
\end{theorem}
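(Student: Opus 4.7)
The plan is to treat this theorem as essentially a packaging statement: all of the correctness reasoning has been carried out in Section~\ref{subsubsection:returning}, and all of the geometric facts needed are already assembled (Fact~\ref{fact:simplification}, Lemma~\ref{lemma:lds} and Fact~\ref{fact:a-segment-query-to-a-subcurve}). The proof therefore reduces to two pieces of bookkeeping: verify that correctness follows by combining the two cases of the case analysis in Section~\ref{subsubsection:returning}, and tally the preprocessing, space, and query resources of the three building blocks, substituting $\delta = \varepsilon/60$ so that $\delta^{-1}, \delta^{-2}, \delta^{-2d}$ become $\Theta(\varepsilon^{-1}), \Theta(\varepsilon^{-2}), \Theta(\varepsilon^{-2d})$.

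For preprocessing and space, I would process three components in sequence. First, build $K = \simpl(P,\delta r)$ in $O(n)$ time via Fact~\ref{fact:simplification}. Second, build the trough-based range searching data structure of Lemma~\ref{lemma:lds} on $K$; since $K$ is $6c$-packed and all but the last edge has length $\geq \delta r$, this takes $O(n \log^{d+1} n + c\varepsilon^{-1} n \log n)$ time and $O(n \log^d n + c\varepsilon^{-1} n)$ space. Third, build the approximate Fréchet data structure of Fact~\ref{fact:a-segment-query-to-a-subcurve} on $K$ with accuracy $\delta$; this contributes $O(\varepsilon^{-2d} \log^2(1/\varepsilon)\, n \log^2 n)$ time and $O(\varepsilon^{-2d} \log^2(1/\varepsilon)\, n)$ space. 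Summing these matches the bounds claimed.

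For the query, I would account for the four steps of Sections~\ref{subsubsection:vertices}--\ref{subsubsection:returning}. (a) Invoke Lemma~\ref{lemma:lds} once for each $q_i$; this produces $|T_i| = O(c\varepsilon^{-1})$ candidate edges in total time $O(m \log^d n + mc\varepsilon^{-1})$. (b) For each edge in $T_i$, place $O(\varepsilon^{-1})$ evenly spaced samples on its chord with $B(q_i,2r)$, giving $|W_i| = O(c\varepsilon^{-2})$ and a total of $O(mc\varepsilon^{-2})$ vertices. (c) For each pair $(w_{i,j}, w_{i+1,k}) \in W_i \times W_{i+1}$ — there are $O(mc^2\varepsilon^{-4})$ such pairs — query Fact~\ref{fact:a-segment-query-to-a-subcurve} at accuracy $\delta$ in $O(\varepsilon^{-2}\log n \log\log n)$ time per query, giving $O(mc^2\varepsilon^{-6}\log n \log \log n)$ total. (d) Run BFS on the layered graph, which has $O(mc^2\varepsilon^{-4})$ edges and is dominated by (c). Summing yields the stated query bound.

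The main obstacle is not the arithmetic but ensuring the sampling density on each chord used in step (b) is compatible with the correctness argument; this is where the constant $\delta = \varepsilon/60$ matters. Concretely, one has to confirm that two consecutive samples on the chord are within Euclidean distance $\leq \delta r$ of one another, so that the neighbour-point argument of Section~\ref{subsubsection:returning} (providing $u_i,v_i \in W_i$ sandwiching $k_i^\ast$ with $\dis{K}{u_i}{k_i^\ast}, \dis{K}{v_i}{k_i^\ast}, \dis{K}{u_i}{v_i} \leq \delta r$) goes through. Once that is in place, the chain of inequalities already displayed in the second case gives $C_i < r$ for the constructed path $u_1 \ldots u_m$, closing the contradiction and yielding correctness; the theorem then follows by simply collecting the bounds above.
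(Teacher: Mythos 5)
Your proposal is correct and follows essentially the same route as the paper: the paper likewise defers correctness to the case analysis of Section~\ref{subsubsection:returning} and proves the theorem by tallying the preprocessing, space, and query costs of Fact~\ref{fact:simplification}, Lemma~\ref{lemma:lds} and Fact~\ref{fact:a-segment-query-to-a-subcurve}, then substituting $\delta^{-1} = O(\varepsilon^{-1})$. Your extra remark about checking that consecutive chord samples are within $\delta r$ is exactly the condition the paper builds into the construction of $W_i$, so nothing is missing.
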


\begin{proof}
First, we summarise the preprocessing procedure. Let $\delta = \varepsilon/60$. We use Fact~\ref{fact:simplification} to construct the simplification~$K = \simpl(P, \delta r)$. We use Lemma~\ref{lemma:lds} to construct a range searching data structure on~$K$, and we use Fact~\ref{fact:a-segment-query-to-a-subcurve} to construct an approximate distance data structure on~$K$. Next, we summarise the query procedure. We query Lemma~\ref{lemma:lds} to construct $W_i$ for $1 \leq i \leq m$, we query Fact~\ref{fact:a-segment-query-to-a-subcurve} to construct the edges between $W_i$ and $W_{i+1}$ for $1 \leq i \leq m-1$, and finally we run a breadth first search. We argued correctness in Section~\ref{subsubsection:returning}. It remains to analyse preprocessing time, space, and query time.

The preprocessing time of Fact~\ref{fact:simplification}, Lemma~\ref{lemma:lds} and Fact~\ref{fact:a-segment-query-to-a-subcurve} is $O(n \log^{d+1} n + c \delta^{-1} n \log n + \delta^{-2d} \log^2(1/\delta) n \log^2 n)$. The space of the data structures in Lemma~\ref{lemma:lds} and Fact~\ref{fact:a-segment-query-to-a-subcurve} is $O(n \log^d n + c \delta^{-1} n + \delta^{-2d} \log^2 (1/\delta) n)$. Substituting $\delta^{-1} = O(\varepsilon^{-1})$ yields the stated preprocessing time and space. 

We analyse the query time. Constructing the set~$W_i$ for all $1 \leq i \leq m$ takes $O(m(\log^d n + c \delta^{-2}))$ time, since using Lemma~\ref{lemma:lds} to query the set of edges close to~$q_i$ takes $O(\log^d n + c \delta^{-1})$ time, and constructing evenly spaced points takes $O(c \delta^{-2})$ time. Since $|W_i| = O(c \delta^{-2})$, the number of pairs $\cup_{i=1}^{m-1} (W_i \times W_{i+1})$ is $O(m c^2 \delta^{-4})$. Querying Fact~\ref{fact:a-segment-query-to-a-subcurve} to decide whether there is a directed edge takes $O(\delta^{-2} \log n \log \log n)$ time per pair. In total, constructing the edges in the layered directed graph takes $O(m c^2 \delta^{-6} \log n \log \log n)$ time. Running breadth first search takes $O(m c^2 \delta^{-4})$ time. The total query time is $O(\log^d n + c^2 \delta^{-6} m \log n \log \log n)$. Substituting $\delta^{-1} = O(\varepsilon^{-1})$ yields the stated query time. 
\end{proof}

\subsection{Complete approximate decider}
\label{section:decision:complete}

The third step in the decision algorithm is to use the fuzzy decider to construct a complete approximate decider. Recall that, given $\varepsilon$, $P$, $Q$ and $r$, the complete approximate decider returns either $(i)$~$d_F(P,Q) \leq r$, $(ii)$~$d_F(P,Q) > r$, or $(iii)$~a $(1+\varepsilon)$-approximation for $d_F(P,Q)$. 

\begin{theorem}[Complete approximate decider]
\label{theorem:complete_decider}
Given a positive real number $r$, $0 < \varepsilon < \frac 1 2$, and a $c$-packed curve $P$ with $n$ vertices in~$\mathbb R^d$, one can construct a data structure in $O(n \log^{d+1} n + c \varepsilon^{-1} n \log n + \varepsilon^{-2d} \log^2(1/\varepsilon) n \log^2 n)$ time that uses $O(n \log^d n + c \varepsilon^{-1} n + \varepsilon^{-2d} \log^2 (1/\varepsilon) n)$ space, so that given a query curve~$Q$ with~$m$ vertices in~$\mathbb R^d$, the data structure returns in $O(\log^d n + m c^2 \varepsilon^{-6} \log n \log \log n)$ query time either $(i)$~$d_F(P,Q) \leq r$, $(ii)$~$d_F(P,Q) > r$, or $(iii)$~a $(1+\varepsilon)$-approximation for $d_F(P,Q)$.
\end{theorem}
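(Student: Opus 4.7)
The plan is to invoke the fuzzy decider of Theorem~\ref{theorem:fuzzy_decider} at two carefully chosen radii that sandwich~$r$, and then to read off the complete decider's answer from the pair of responses. Writing $\varepsilon' = \varepsilon/30$ as in the fuzzy decider, I would set
\[
    r_1 \;=\; \frac{r}{1+\varepsilon'/2}, \qquad r_2 \;=\; \frac{r}{1-2\varepsilon'},
\]
so that $(1+\varepsilon'/2)\,r_1 = r$ and $(1-2\varepsilon')\,r_2 = r$. During preprocessing I would build two independent copies of the fuzzy-decider data structure of Theorem~\ref{theorem:fuzzy_decider}, one tuned to $r_1$ and one tuned to $r_2$; this multiplies preprocessing time and space by a constant and so preserves the bounds in the statement.

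At query time both fuzzy deciders are run on the query curve~$Q$. If the decider at~$r_1$ returns option~$(i)$, then $d_F(P,Q) \leq (1+\varepsilon'/2)\,r_1 = r$, and we output~$(i)$. Otherwise, if the decider at~$r_2$ returns option~$(ii)$, then $d_F(P,Q) > (1-2\varepsilon')\,r_2 = r$, and we output~$(ii)$. The only remaining possibility is that the decider at~$r_1$ returns~$(ii)$ while the decider at~$r_2$ returns~$(i)$, in which case combining the two guarantees yields
\[
    (1-2\varepsilon')\,r_1 \;<\; d_F(P,Q) \;\leq\; (1+\varepsilon'/2)\,r_2,
\]
and we report $U := (1+\varepsilon'/2)\,r_2$ as the approximation.

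To certify that $U$ is a valid option~$(iii)$, note that $U \geq d_F(P,Q)$ is immediate, and
\[
    \frac{U}{d_F(P,Q)} \;\leq\; \frac{(1+\varepsilon'/2)\,r_2}{(1-2\varepsilon')\,r_1} \;=\; \frac{(1+\varepsilon'/2)^2}{(1-2\varepsilon')^2}.
\]
Substituting $\varepsilon' = \varepsilon/30$ and using $0 < \varepsilon < \tfrac12$, a direct arithmetic check shows this ratio is at most $1+\varepsilon$, so $U$ is a genuine $(1+\varepsilon)$-approximation of $d_F(P,Q)$. The three branches above are exhaustive: any seeming overlap, for instance $(i)$ at~$r_1$ together with $(ii)$ at~$r_2$, would simultaneously force $d_F(P,Q)\leq r$ and $d_F(P,Q)>r$, a contradiction. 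Preprocessing, space, and query time each increase by only a factor of two over Theorem~\ref{theorem:fuzzy_decider}, which matches the asymptotic bounds claimed here. The only subtle step is the constant-chasing that certifies $(1+\varepsilon'/2)^2 \leq (1+\varepsilon)(1-2\varepsilon')^2$, which is routine thanks to the generous choice $\varepsilon' = \varepsilon/30$.
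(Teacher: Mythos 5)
Your proposal is correct and follows essentially the same route as the paper: the same radii $r_1 = r/(1+\varepsilon'/2)$ and $r_2 = r/(1-2\varepsilon')$, two fuzzy-decider calls, the identical case analysis, and the same ratio bound $\bigl(\tfrac{1+\varepsilon'/2}{1-2\varepsilon'}\bigr)^2 \leq 1+\varepsilon$; the only cosmetic difference is that you report the upper endpoint of the resulting interval as the option-$(iii)$ value while the paper reports the lower endpoint, and both are valid.
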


\begin{proof}
Let~$r_1 = \frac 1 {1 + \varepsilon'/2} \, r$ and~$r_2 = \frac 1 {1 - 2\varepsilon'} \, r$, where~$\varepsilon' = \varepsilon/30$. First, given $r_1$,~$\varepsilon$, and~$P$, construct the data structure in Theorem~\ref{theorem:fuzzy_decider}, and query the data structure on~$Q$. Second, given $r_2$,~$\varepsilon$, and~$P$, construct the data structure in Theorem~\ref{theorem:fuzzy_decider}, and query the data structure on~$Q$. If the first query returns~$(i)$, we return~$(i)$. If both the first and second queries return~$(ii)$, we return~$(ii)$. Otherwise, if the first query returns~$(ii)$ and the second query returns~$(i)$, we return~$(iii)$. We prove correctness in three cases.

\begin{itemize}
    \item The first query returns~$(i)$. Then by Theorem~\ref{theorem:fuzzy_decider}, $d_F(P,Q) \leq (1+\varepsilon'/2) r_1 = r$, so returning~$(i)$ in the complete approximate decider is correct.
    \item Both the first and second queries return~$(ii)$. Then by Theorem~\ref{theorem:fuzzy_decider}, $d_F(P,Q) > (1-2\varepsilon') r_2 = r$, so returning~$(ii)$ in the complete approximate decider is correct.
    \item The first query returns~$(ii)$ and the second query returns~$(i)$. The first query implies $d_F(P,Q) >
    (1-2 \varepsilon') \cdot r_1
    = (1-2 \varepsilon') \cdot \frac{1}{1 + \varepsilon'/2} \cdot r 
    $. The second query implies $d_F(P,Q) \leq ( 1 + \varepsilon'/2 ) \cdot r_1 = ( 1 + \varepsilon'/2 ) \cdot \frac{1}{1-2 \varepsilon'} \cdot r$. Putting these together, we have \[d_F(P,Q) \in
    \left(
    \frac{1-2 \varepsilon'
    }{1 + \varepsilon'/2 }  r,
    \frac{1 + \varepsilon'/2 }{1-2 \varepsilon'
    }  r
    \right].
    \]
    Note that  
    \[
    \frac{\frac{1 + \varepsilon'/2 }{1-2 \varepsilon'}  r}{\frac{1-2 \varepsilon'}{1 + \varepsilon'/2 }  r}
    =
    \left(\frac{1 + \varepsilon'/2 }{1-2 \varepsilon'}\right)^2 
    <
    \left((1+\varepsilon'/2) (1+4\varepsilon')\right)^2 
    <
    (1+6 \varepsilon')^2 
    <
    1 + 30 \varepsilon' 
    =
    1 + \varepsilon.
    \]
    Hence, $ 
    \frac{1-2 \varepsilon'
    }{1 + \varepsilon'/2 }  r$ is a $(1 + \varepsilon)$-approximation of $d_F(P,Q)$, so returning~$(iii)$ in the complete approximate decider is correct.
\end{itemize}
Finally, the preprocessing time, space, and query time follow from Theorem~\ref{theorem:fuzzy_decider}.
\end{proof}

This completes the decision version of the approximate Fr\'echet distance problem. Next, we consider the optimisation version of the approximate Fr\'echet distance problem.

\section{Optimisation algorithm}
\label{section:optimisation}

In Section~\ref{section:optimisation:simplification}, we apply a binary search to compute the optimal simplification. In Section~\ref{section:optimisation:parametric search}, we apply parametric search to compute the Fr\'echet distance. In both steps, we use the complete approximate decider in Theorem~\ref{theorem:complete_decider}, which incurs an approximation factor of~$(1+\varepsilon)$.

\subsection{Approximating the optimal simplification}
\label{section:optimisation:simplification}

First, we provide an algorithm to compute the optimal simplification of~$P$. In particular, the optimal simplification is $K^* = \simpl(P,\delta r^*)$, where $\delta = \varepsilon / 60$ and $r^* = d_F(P,Q)$. Our approach is to search over the critical values of the $\mu$-simplification algorithm in Fact~\ref{fact:simplification}. A critical value of the $\mu$-simplification algorithm is a value of~$\mu$ where the simplification changes. Define the set of pairwise distances of~$P$ to be~$L(P) = \{d(p_i, p_j): 1 \leq i < j \leq n\}$. We can observe that the set of pairwise distances~$L$ breaks up the positive real line into~$\binom{n}{2} + 1$ intervals, such that within each interval the $\mu$-simplification does not change. This observation follows from the algorithm in Fact~\ref{fact:simplification}, and the same observation is made in Section~3.3.3 in~\cite{Driemel2012}. Unfortunately, $|L| = O(n^2)$. To overcome this, we use approximate distance selection. 

\begin{fact}[Lemma~3.9 in~\cite{Driemel2012}]
\label{fact:wspd}
    Given a set~$P$ of $n$ points in~$\mathbb R^d$, one can compute in $O(n \log n)$ time a set~$Z$ of~$O(n)$ numbers, such that for any $y \in L(P)$, there exists numbers $x, x' \in Z$ such that $x \leq y \leq x' \leq 2 x$. 
\end{fact}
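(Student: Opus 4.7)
The plan is to build $Z$ from a well-separated pair decomposition (WSPD) of $P$. I would fix a constant separation parameter $s \geq 5$ and invoke the Callahan--Kosaraju construction, which in $O(n \log n)$ time produces $N = O(n)$ pairs $\{(A_i, B_i)\}_{i=1}^{N}$ of subsets of $P$ such that (a) every unordered pair of distinct points of $P$ is covered by exactly one $(A_i, B_i)$, and (b) $d(A_i, B_i) \geq s \cdot \max(\mathrm{diam}(A_i), \mathrm{diam}(B_i))$. For each pair I would pick arbitrary representatives $a_i \in A_i$ and $b_i \in B_i$, and compute $r_i = d(a_i, b_i)$.

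The output set would then be
\[
    Z = \Bigl\{ \tfrac{r_i}{1 + 2/s} : 1 \leq i \leq N \Bigr\} \cup \Bigl\{ r_i (1 + 2/s) : 1 \leq i \leq N \Bigr\},
\]
which has $O(n)$ elements and is assembled in an additional $O(n)$ time, so the overall bound is $O(n \log n)$. For correctness, take any $y = d(p,q) \in L(P)$. Property (a) locates the unique WSPD pair $(A_i, B_i)$ with $p \in A_i$ and $q \in B_i$. A standard triangle-inequality argument using (b) (namely $d(a,b) \leq d(a',b') + 2\mathrm{diam}$ combined with $d(a',b') \geq s\cdot\mathrm{diam}$) yields $(1+2/s)^{-1} \leq y/r_i \leq 1 + 2/s$. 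Setting $x := r_i/(1+2/s)$ and $x' := r_i(1+2/s)$, both of which lie in $Z$, we get $x \leq y \leq x'$, and $x'/x = (1+2/s)^2 \leq (7/5)^2 = 49/25 < 2$, which is the required sandwiching condition.

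The only design decision is choosing $s$ large enough that $(1+2/s)^2 \leq 2$, so that the required ratio bound holds, while keeping $s$ constant so that Callahan--Kosaraju still delivers $O(n)$ pairs in $O(n \log n)$ time (with dependency on $d$ and $s$ absorbed into the constants); any $s \geq 5$ satisfies both. I do not expect any genuine obstacle: the WSPD construction and the two-line triangle-inequality computation are textbook, and this is essentially the short argument given in~\cite{Driemel2012}.
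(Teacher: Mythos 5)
Your proposal is correct and matches the standard argument behind the cited result: the paper itself gives no proof of this Fact, deferring entirely to Lemma~3.9 of~\cite{Driemel2012}, which is exactly the WSPD-based approximate distance selection you describe (the original uses a constant-separation WSPD, as the paper's Corollary~\ref{corollary:wspd} makes explicit by swapping it for an $8/\varepsilon$-WSPD). Your triangle-inequality sandwich and the choice $s \geq 5$ so that $(1+2/s)^2 < 2$ are sound, so there is nothing to fix.
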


We can refine Fact~\ref{fact:wspd} to obtain Corollary~\ref{corollary:wspd}. We replace the 8-WSPD in Lemma~3.9 of~\cite{Driemel2012} with an $8/\varepsilon$-WSPD. 

\begin{corollary}
\label{corollary:wspd}
    Given a set~$P$ of $n$ points in~$\mathbb R^d$, one can compute in $O(n / \varepsilon^d + n \log n)$ time a set~$Z$ of~$O(n / \varepsilon^d)$ numbers, such that for any $y \in L(P)$, there exists numbers $x, x' \in Z$ such that $x \leq y \leq x' \leq (1+\varepsilon) x$. 
\end{corollary}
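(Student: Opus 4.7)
The plan is to mirror the proof of Fact~\ref{fact:wspd} but to increase the separation parameter of the underlying well-separated pair decomposition (WSPD) from a constant to $\Theta(1/\varepsilon)$. Recall that an $s$-WSPD of a point set $P \subset \mathbb{R}^d$ is a collection of pairs $\{(A_i, B_i)\}_i$ of subsets of $P$ such that every unordered pair $\{p,q\} \subseteq P$ belongs to exactly one $A_i \times B_i$, and for each $i$, $\max(\mathrm{diam}(A_i), \mathrm{diam}(B_i)) \leq (1/s) \cdot d(A_i, B_i)$. The Callahan--Kosaraju construction produces $O(s^d n)$ pairs in $O(n \log n + s^d n)$ time in $\mathbb{R}^d$.

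First, I would compute an $(8/\varepsilon)$-WSPD of $P$ in $O(n \log n + n/\varepsilon^d)$ time, yielding $O(n/\varepsilon^d)$ pairs. For each pair $(A_i, B_i)$, pick arbitrary representatives $a_i \in A_i$ and $b_i \in B_i$ and set $r_i = d(a_i, b_i)$. Build $Z$ by inserting the two values $(1 - \varepsilon/4)\, r_i$ and $(1 + \varepsilon/4)\, r_i$ for every pair; this keeps $|Z| = O(n/\varepsilon^d)$ and the total construction time within the claimed bound.

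The key step is verifying the approximation property. Any $y \in L(P)$ equals $d(p,q)$ for some unique pair $p \in A_i$, $q \in B_i$. Since $d(p, a_i) \leq \mathrm{diam}(A_i) \leq d(A_i, B_i)/s \leq r_i/s$ and likewise $d(q, b_i) \leq r_i/s$, two applications of the triangle inequality yield
\[
(1 - 2/s)\, r_i \;\leq\; d(p,q) \;\leq\; (1 + 2/s)\, r_i.
\]
Substituting $s = 8/\varepsilon$ gives $(1 - \varepsilon/4)\, r_i \leq y \leq (1 + \varepsilon/4)\, r_i$. Taking $x = (1-\varepsilon/4)\, r_i$ and $x' = (1+\varepsilon/4)\, r_i$, a short calculation using the standing assumption $\varepsilon < 1/2$ shows
\[
\frac{x'}{x} \;=\; \frac{1 + \varepsilon/4}{1 - \varepsilon/4} \;=\; 1 + \frac{\varepsilon/2}{1 - \varepsilon/4} \;<\; 1 + \varepsilon,
\]
which is exactly the conclusion of the corollary.

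I do not anticipate any substantive obstacle: the analysis is essentially routine once the WSPD separation is boosted to $\Theta(1/\varepsilon)$. The only detail to keep in mind is tracking the $(1 \pm \varepsilon/4)$ bounds carefully to confirm $x'/x \leq 1 + \varepsilon$ under $\varepsilon < 1/2$, and to check that inserting two candidate values per pair preserves the $O(n/\varepsilon^d)$ cardinality and time bounds claimed.
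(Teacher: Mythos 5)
Your proposal is correct and matches the paper's intended argument exactly: the paper's one-line justification is precisely to replace the $8$-WSPD underlying Fact~\ref{fact:wspd} with an $(8/\varepsilon)$-WSPD, which is what you do, and your accounting of the pair count $O(n/\varepsilon^d)$, the construction time, and the ratio $(1+\varepsilon/4)/(1-\varepsilon/4) < 1+\varepsilon$ for $\varepsilon < 1/2$ is accurate. No gaps.
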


Next, we perform binary search on the set $Z$ in Corollary~\ref{corollary:wspd}. In particular, for $x \in Z$, we decide whether~$\delta r^* < x$ or~$\delta r^* > x$ by running the complete approximate decider in Theorem~\ref{theorem:complete_decider} on $r = x/\delta$, $\delta = \varepsilon / 60$, $P$ and~$Q$. After $O(\log n)$ applications of the complete approximate decider, we obtain $\delta r^* \in [x,x']$ for a consecutive pair of elements $x,x' \in Z$. We have two cases. In the first case, we compute the optimal simplification of~$P$, that is, $K^* = \simpl(P,\delta r^*)$. In the second case, we compute a $(1+\varepsilon)$-approximation of~$r^* = d_F(P,Q)$. 

\begin{itemize}
    \item If $x' > (1+\varepsilon) x$. By the contrapositive of Corollary~\ref{corollary:wspd}, there is no $y \in L(P) \cap [x,x']$. In other words, within the interval~$[x,x']$ the simplification of~$P$ does not change. Therefore, $K^* = \simpl(P,x) = \simpl(P,\delta r^*)$.
    \item If $x' \leq (1+\varepsilon) x$. Therefore, $x'/\delta$ is a $(1+\varepsilon)$-approximation of~$r^* = d_F(P,Q)$, as required.
\end{itemize}

Therefore, we can compute $K^* = \simpl(P, \delta r^*)$, as otherwise we would have a $(1+\varepsilon)$-approximation of~$r^*$. The running time is dominated by the $O(\log n)$ applications of the complete approximate decider.

\subsection{Approximating the Fr\'echet distance}
\label{section:optimisation:parametric search}

From Section~\ref{section:optimisation:simplification}, we computed the simplification $K^* = \simpl(P,\delta r^*)$. Let~$r_1^* = \frac 1 {1 + \varepsilon'/2} \, r^*$,~$r_2^* = \frac 1 {1 - 2\varepsilon'} \, r^*$. We can use the same procedure to compute the simplifications~$K_1^* = \simpl(P,\delta r_1^*)$ and~$K_2^* = \simpl(P, \delta r_2^*)$. If $K_1^* \neq K^*$, then there must be an element $x \in Z$ in the interval~$[\delta r_1^*, \delta r^*]$, so $x/\delta$ would be a~$(1+\varepsilon)$-approxmation of~$r^*$. Therefore, $K^* = K_1^*$, and similarly, $K^* = K_2^*$.

We proceed with parametric search. Note that in Section~\ref{section:optimisation:simplification}, we did not apply parametric search to compute~$K^*$ due to efficiency reasons. It is not straightforward to parallelise Fact~\ref{fact:simplification}, moreover, since the simplification~$K^* = K_1^* = K_2^*$ does not change during the execution of the parametric search, we can avoid reconstructing the data structures in Lemma~\ref{lemma:lds} and Fact~\ref{fact:a-segment-query-to-a-subcurve}. We obtain the following theorem.

\begin{theorem}
\label{theorem:main_algorithm}
Given $\varepsilon > 0$, a $c$-packed curve~$P$ in~$\mathbb R^d$, and a general curve~$Q$ in~$\mathbb R^d$, one can compute a~$(1+\varepsilon)$-approximation of~$d_F(P,Q)$ in $O(T_s T_p \log m)$ time, where 
\begin{align*}
    T_s &= n \log^{d+1} n + c \varepsilon^{-1} n \log n + \varepsilon^{-2d} \log^2(1/\varepsilon) n \log^2 n + mc^2 \varepsilon^{-6} \log n \log \log n, \\
    T_p &= \log^d n + c \varepsilon^{-1} + \varepsilon^{-2} \log n \log \log n.
\end{align*}
\end{theorem}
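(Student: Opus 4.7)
The plan is to apply Cole's parametric search on top of the complete approximate decider of Theorem~\ref{theorem:complete_decider}, using the sequential decider as the oracle for resolving comparisons and a parallelized version of the fuzzy decider as the parallel algorithm. The preparatory work in Sections~\ref{section:optimisation:simplification} and~\ref{section:optimisation:parametric search} already guarantees that within the search interval the simplification $K^* = K_1^* = K_2^*$ does not change, so the data structures of Lemma~\ref{lemma:lds} and Fact~\ref{fact:a-segment-query-to-a-subcurve} built once on $K^*$ remain valid for every comparison made during the search; this avoids the otherwise prohibitive cost of rebuilding them at each step, which is exactly the efficiency concern flagged at the start of Section~\ref{section:optimisation:parametric search}.

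Next I would exhibit the parallel counterpart of the fuzzy decider. All three stages parallelize naturally. The $m$ candidate sets $W_i$ are built independently, each requiring $O(\log^d n + c\delta^{-1})$ depth for the range query plus $O(c\delta^{-2})$ work to distribute evenly spaced points on the chords. The $O(mc^2\delta^{-4})$ potential edges of the layered graph are pairwise independent, and each is resolved by a single invocation of Fact~\ref{fact:a-segment-query-to-a-subcurve} in $O(\delta^{-2}\log n \log\log n)$ time. Finally, reachability in the $m$-layer DAG can be propagated in $O(\log m)$ additional depth via parallel prefix on the edge bits. The dominant per-invocation parallel depth is therefore $T_p = O(\log^d n + c\delta^{-1} + \delta^{-2}\log n\log\log n)$, with the $O(\log m)$ BFS contribution absorbed into the outer $\log m$ factor of the final bound.

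Finally, I would invoke Cole's parametric search framework. Each comparison made by the parallel decider is of the form ``is a constant-degree expression in $r$ at most a given threshold?''; the predicates arising from the low-density range search of Lemma~\ref{lemma:lds} (trough containment in $\mathbb R^{d+1}$) and from the approximate Fr\'echet queries of Fact~\ref{fact:a-segment-query-to-a-subcurve} are both bounded-degree in $r$, so each comparison yields $O(1)$ critical values. These critical values are resolved by binary search using the sequential decider of Theorem~\ref{theorem:complete_decider} at cost $O(T_s)$ per invocation, and the standard Cole analysis bounds the total cost by $O(T_s T_p \log m)$. The search terminates either when $r^*$ is localized into a multiplicative interval of ratio $1+\varepsilon$, in which case either endpoint is a $(1+\varepsilon)$-approximation, or when the decider itself certifies such an approximation through its option $(iii)$.

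The step I expect to require the most care is verifying the parallel depth of the auxiliary stages, particularly the BFS over the layered DAG and the range-search stage built on top of Fact~\ref{fact:schwarzkopf}. Because neither of these is the dominant cost, the coarse $O(\log m)$ and $O(\log^d n)$ depths respectively already suffice for the stated bound, but one must still verify that the underlying data structures admit independent parallel queries. Since both are purely query-driven structures and the simplification $K^*$ is fixed throughout the search, this parallelization is routine, and the final running time of $O(T_s T_p \log m)$ with the stated $T_s$ and $T_p$ follows immediately from Theorem~\ref{theorem:complete_decider} together with Cole's bound.
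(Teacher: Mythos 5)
Your proposal follows essentially the same route as the paper's proof: build the data structures once on the fixed simplification $K^* = K_1^* = K_2^*$, run parametric search with the complete approximate decider of Theorem~\ref{theorem:complete_decider} as the sequential oracle and a parallelized fuzzy decider (one processor per layer of $Q$) as the simulated algorithm, and terminate either via option~$(iii)$ or by localizing $r^*$. The only cosmetic differences are that you attribute the $O(P_p T_p + T_s T_p \log P_p)$ bound to Cole rather than Megiddo and that you parallelize the BFS stage explicitly, whereas the paper simply observes that the BFS generates no critical values; neither affects the stated running time.
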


\begin{proof}
First, we summarise the preprocessing procedure. We compute the simplification $K^* = \simpl(P, \delta r^*)$ using the procedure described in Section~\ref{section:optimisation:simplification}. We build the data structures in Lemma~\ref{lemma:lds} and Fact~\ref{fact:a-segment-query-to-a-subcurve} on the simplified curve~$K^*$. 

Second, we summarise the query procedure. Here, we use parametric search. We use the algorithm in Theorem~\ref{theorem:complete_decider} as both the decision algorithm and the simulated algorithm. We describe the simulated algorithm. Let~$r$ be the search parameter. Let~$r_1 = \frac 1 {1 + \varepsilon'/2} \, r$ and~$r_2 = \frac 1 {1 - 2\varepsilon'} \, r$. We simulate the complete approximate decider by simulating the fuzzy decider in Theorem~\ref{theorem:fuzzy_decider} on~$r_1$ and~$r_2$. We divide the simulation of the fuzzy decider on~$r_1$ into three steps. First, we compute~$W_i$ by querying the data structure in Lemma~\ref{lemma:lds}. We use parametric search and the decision algorithm (Theorem~\ref{theorem:complete_decider}) to resolve the critical values in the query. Second, we compute the directed edges from~$W_i$ to~$W_{i+1}$ by querying the data structure in Fact~\ref{fact:a-segment-query-to-a-subcurve}. We apply parametric search in the same way. Third, we run a breadth first search on the layered directed graph. There are no critical values in this step, so we do not need to apply parametric search. We repeat the simulation of the fuzzy decider on~$r_2$. Finally, by parametric search, we return the optimal value~$r^*$.

Third, we argue correctness. If Theorem~\ref{theorem:complete_decider} returns~$(iii)$ at any point, we obtain a~$(1+\varepsilon)$-approximation of~$r^*$, and we are done. If Theorem~\ref{theorem:complete_decider} never returns~$(iii)$ at any point, we will show that the decision algorithm and the simulated algorithm are both correct. The decision algorithm is correct since we either return~$r^* \leq r$ or $r^* > r$. We show the preprocessing and query procedures of the simulated algorithm are correct. In particular, we will show that we correctly simulate the execution of Theorem~\ref{theorem:complete_decider} as though~$r = r^*$. The preprocessing procedure is correct, since~$K^* = K_1^* = K_2^*$, so our data structures are correct for~$r_1^*$ and~$r_2^*$. The query procedure is correct, since we can use the correct decision algorithm to resolve all critical values, and simulate the correct execution path as though~$r = r^*$. Moreover, Theorem~\ref{theorem:complete_decider} (without~$(iii)$) acts discontinuously at~$r = r^*$, so $r^*$ is a critical value of the simulated algorithm. Therefore, parametric search is able to locate~$r^*$ and return it.

Fourth, we analyse the running time. The preprocessing time is dominated by $O(\log n)$ calls to Theorem~\ref{theorem:complete_decider}. The query time is dominated by parametric search. The running time of parametric search is $O(P_p T_p + T_p T_s \log P_p)$, where $T_s$ is the sequential running time of the decision algorithm, $P_p$ is the number of processors used in the simulated algorithm, and~$T_p$ is the number of parallel steps used by the simulated algorithm. The sequential running time is~$T_s = O(n \log^{d+1} n + c \varepsilon^{-1} n \log n + \varepsilon^{-2d} \log^2(1/\varepsilon) n \log^2 n + mc^2 \varepsilon^{-6} \log n \log \log n)$ by Theorem~\ref{theorem:complete_decider}. The simulated algorithm can be efficiently parallelised. In particular, the simulated algorithm computes~$W_i$ by querying Lemma~\ref{lemma:lds}, and computes the directed edges from~$W_i$ to~$W_{i+1}$ by querying Lemma~\ref{fact:a-segment-query-to-a-subcurve}; these can be queried in parallel for all~$1 \leq i \leq m$. Given~$P_p = m$ processors, we can perform all of these queries in in $T_p = O(\log^d n + c \delta^{-1} + \delta^{-2} \log n \log \log n)$ parallel steps. The overall running time is dominated by~$O(T_s T_p \log m)$, which the stated running time.
\end{proof}

We can simplify the running time if $\varepsilon$ is constant.

\begin{corollary}
Given a constant $\varepsilon > 0$, a $c$-packed curve~$P$ with~$n$ vertices in~$\mathbb R^d$, and a general curve~$Q$ with~$m$ vertices in~$\mathbb R^d$, one can~$(1+\varepsilon)$-approximate~$d_F(P,Q)$ in $O(c^3 (n+m) \log^{2d+1} (n) \log m)$ time.
\end{corollary}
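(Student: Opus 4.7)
The plan is to substitute $\varepsilon = \Theta(1)$ directly into the running time $O(T_s T_p \log m)$ from Theorem~\ref{theorem:main_algorithm} and absorb subdominant terms. With $\varepsilon$ a constant, every factor of the form $\varepsilon^{-1}$, $\varepsilon^{-2d} \log^2(1/\varepsilon)$, $\varepsilon^{-6}$, and $\delta^{-1}, \delta^{-2}$ (recall $\delta = \varepsilon/60$) collapses to $O(1)$, so the expressions for $T_s$ and $T_p$ become purely functions of $c$, $n$, $m$ and logarithms.

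First I would simplify $T_s$. After killing all $\varepsilon$ factors, the surviving terms are $n\log^{d+1} n$, $cn\log n$, $n\log^2 n$, and $c^2 m \log n \log\log n$. Since $d\ge 1$, the term $n\log^{d+1} n$ dominates both $cn\log n$ (for the $n$-part, up to a factor of $c$) and $n\log^2 n$, while $c^2 m \log n \log\log n = O(c^2 m \log^{d+1} n)$. Bounding both surviving terms by a single expression yields $T_s = O\bigl(c^2 (n+m) \log^{d+1} n\bigr)$. Next I would simplify $T_p = \log^d n + c\delta^{-1} + \delta^{-2}\log n \log\log n$ in the same way: with $\delta = \Theta(1)$ this is $O(\log^d n + c + \log n\log\log n)$, and because $d\ge 1$ and we may assume $c\ge 1$, this is bounded by $O(c\log^d n)$ (absorbing the $\log\log$ factor into one of the $\log$ factors).

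Finally I would multiply and pick up the $\log m$ factor:
\[
T_s \cdot T_p \cdot \log m
\;=\; O\bigl(c^2(n+m)\log^{d+1} n\bigr) \cdot O\bigl(c \log^d n\bigr) \cdot \log m
\;=\; O\bigl(c^3 (n+m) \log^{2d+1}(n) \log m\bigr),
\]
which matches the claimed bound. The proof is essentially bookkeeping and carries no real obstacle; the only point that requires a moment's care is verifying in each of $T_s$ and $T_p$ that the stated $\log^{d+1} n$ and $\log^d n$ powers do indeed dominate the $\log^2 n$ and $\log n \log\log n$ terms once $\varepsilon$ is absorbed into the constant, which holds whenever $d \ge 1$ as assumed throughout the paper.
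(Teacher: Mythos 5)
Your proposal is correct and is exactly the argument the paper intends (the paper states this corollary without proof, as it follows from Theorem~\ref{theorem:main_algorithm} by setting $\varepsilon = \Theta(1)$ and absorbing subdominant terms). The only loose point is absorbing $\log n \log\log n$ into $O(c\log^d n)$ when $d=1$, but this level of bookkeeping with $\log\log$ factors is inherited from the paper's own statement of the bound.
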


\section{Conclusion}

In this paper, we provide an~$O(c^3 (n+m) \log^{2d+1} (n) \log m)$ time algorithm to~$(1+\varepsilon)$-approximate the Fr\'echet distance between two curves in $\mathbb R^d$, in the case when only one curve is $c$-packed and~$\varepsilon$ is constant. The running time is nearly-linear if~$c$ and~$d$ are also constant. An open problem is whether the running time can be improved, in particular, whether the dependence on $\varepsilon$, $c$, $d$, $\log n$ or $\log m$ can be reduced. Another open problem is whether we can obtain results for related problems when only one of the two curves is $c$-packed. Yet another open problem is whether similar results can be obtained for other realistic input curves. In particular, can the Fr\'echet distance be~$(1+\varepsilon)$-approximated in subquadratic time when only one of the curves is $\kappa$-bounded, or when only one of the curves is $\phi$-low density?

\bibliographystyle{plain}
\bibliography{bib}

\end{document}